\title{\large \textbf{Conditional and Unique Coloring of Graphs}}
\author{
\small P.Venkata Subba Reddy  and K. Viswanathan Iyer \thanks {Author for correspondence} \\
\small Dept. of Computer Science and Engineering \\  
\small National Institute of Technology \\ 
\small Tiruchirapalli 620 015, India  \\
\small email : venkatpalagiri@gmail.com, kvi@nitt.edu
} 
\begin{document}
\maketitle
\begin{abstract}  
For integers $k>0$ and $0<r \leq \Delta$ (where $r \leq k$), a conditional $(k,r)$-coloring of a graph $G$ is a proper $k$-coloring of the vertices of $G$ such that every vertex $v$ of degree $d(v)$ in $G$ is adjacent to vertices with at least $\min\{r, d(v)\}$ differently colored neighbors. The smallest integer $k$ for which a graph $G$ has a conditional $(k,r)$-coloring is called the $r$th order conditional chromatic number, denoted by $\chi_r(G)$. For different values of $r$ we first give results (exact values or bounds for $\chi_r(G)$ depending on $r$) related to the conditional coloring of graphs. Then we obtain $\chi_r(G)$ of certain parameterized graphs viz., windmill graph, line graph of windmill graph, middle graph of friendship graph, middle graph of a cycle, line graph of friendship graph, middle graph of complete $k$-partite graph, middle graph of a bipartite graph and gear graph. Finally we introduce \emph{unique conditional colorability} and give some related results. \medskip \\  
\textbf{Keywords:} conditional coloring; conditional chromatic number; operations on graphs; windmill graph; middle graph; gear graph.  \medskip \\ 
\textbf{MSC (2010) classification.} 68R10, 05C15.
\end{abstract}
\section{Introduction}
Let $G= (V(G),E(G))$ be a simple, connected, undirected graph. For a vertex $v \in V(G)$, the \textit{open neighborhood} of $v$ in $G$ is defined by $N_G(v)$= \{$u \in V(G):(u,v) \in E(G)$\}, and the degree of $v$ is denoted by $d(v)$=$|N_G(v)|$. Let $\delta(G),\;\Delta(G)$ and $\omega(G)$ (or simply $\delta,\;\Delta$ and $\omega$) denote, respectively the minimum degree, the maximum degree and the clique number of $G$. For an integer $k>0$, a \textit{proper} $k$-coloring of a graph $G$ is a surjective mapping $c \colon V(G) \to \{1,\ldots,k \}$ such that if $(u,v) \in E(G)$, then $c(u) \neq c(v)$. The smallest $k$ such that $G$ has a proper $k$-coloring is the \textit{chromatic number} $\chi(G)$  of $G$. Given a set $S \subseteq V(G)$ we define $c(S)= \{c(u) : u \in S$ \}. For integers $k>0$ and $0<r \leq \Delta$ (where $r \leq k$), a \textit{conditional} $(k,r)$-coloring of $G$ is a surjective mapping $c \colon V(G) \to \{1,\ldots,k \}$ such that both the following conditions (C1) and (C2) hold:
\begin{quote}
(C1) If $(u,v) \in E(G)$, then $c(u) \neq c(v)$. \\
(C2) For any $v \in V(G)$, $|c(N_G(v))| \geq $ min \{$d(v),r$ \}. 
\end{quote}
The smallest integer $k$ such that $G$ has a conditional $(k,r)$-coloring is called the $r$th order conditional chromatic number of $G$, denoted by $\chi_r(G)$. It is proved in \cite{Li09}, that deciding whether $\chi_r(G) \leq k$ is $NP$-complete. For undefined notations/terminology we refer to standard texts in graph theory such as \cite{Bal,Gol80,Har69,West03}). 
\section{Conditional colorability of some graphs}	
We start with the definitions of Cartesian product and join of two graphs. Let $G=(V(G),E(G))$ and $H=(V(H),E(H))$ be two graphs. The Cartesian product $G \Box H$ of $G$ and $H$ has the vertex set $V(G) \times V(H)$ and the edge set $E(G \Box H)=\{((x_1,x_2),(y_1,y_2)): (x_1,y_1) \in E(G)\; \text{and} \; x_2=y_2, \; \text{\textit{or}} \; (x_2,y_2) \in E(H) \; \text{and} \; x_1=y_1  \}$. The join $G + H$ has the vertex set $V(G) \cup V(H)$ and the edge set $E(G) \cup E(H) \cup CE$ where the cross-edge set $CE=\{(u_g,u_h): u_g \in V(G) \; \text{and} \; u_h \in V(H) \}$.
\newtheorem{thms1}{Theorem}[section] 
\begin{thms1}
Let $G_1$ and $G_2$ be two graphs where $\chi(G_1)=k_1$, $\chi(G_2)=k_2$ and w.l.o.g. let $k_1 \le k_2$. Then $\chi_r (G_1 + G_2)$ = $\chi (G_1+ G_2)$ = $k_1 + k_2$, where  $r \leq k_1+ 1$.
\end{thms1}
\begin{proof}
In  the graph $G_1+ G_2$, $V(G_2) \subset N_{G_1+ G_2}(u)$ if $u \in V(G_1)$ or $V(G_1) \subset N_{G_1+ G_2}(u)$ if  $u \in V(G_2)$. Therefore $c(V(G_1)) \cap c(V(G_2)) = \emptyset $, and in any proper $k$-coloring of $G_1+ G_2$, for all $u \in V(G_1+ G_2)$, $|c(N_{G_1+G_2}(u))| \geq \min \{d(u),r \} $. This implies that  every proper $k$-coloring of $G_1+ G_2$ is also a proper $(k,r)$-coloring -- if not, we get a contradiction:  suppose that $\chi(G_1 + G_2) = k \neq k_1 + k_2$; since $c(V(G_1)) \cap c(V(G_2)) = \emptyset $, either $k_1 \neq \chi(G_1)$  or  $k_2 \neq \chi(G_2)$ which contradicts the given condition.  
\end{proof}
\newtheorem{thms3}[thms1]{Theorem}  
\begin{thms3}
Let $T_1, T_2$ be two non trivial trees with $n_1,n_2$ number of vertices respectively and w.l.o.g. let $n_1 \leq n_2$. Then $\chi_r(T_1+T_2)= 2(r-1)$, where $4 \leq r \leq \  n_1+1$. 
\end{thms3}
\begin{proof}
Every nontrivial tree has at least two vertices with degree one \cite{Har69}. Therefore there exist vertices $u,v$ where $u \in V(T_1)$ and $v \in V(T_2)$, such that $d(u)=d(v)=1$. Therefore, $d_u(T_1+T_2)= 1+n_2$ and $d_v(T_1+T_2)= 1+n_1$. If $\chi_r(T_1+T_2) < 2(r-1)$, then either $|c(V(T_1))| < r-1$ or $|c(V(T_2))| < r-1$ or both because $c(V(T_1)) \cap c(V(T_2)) = \emptyset $. Hence (C2) is violated at $u$ or $v$ or both. Therefore, $\chi_r(T_1+T_2) \geq 2(r-1)$. Since every tree is $2$-colorable and $r \geq 4 $, properly color $V(T_1)$, $V(T_2)$ in $T_1+T_2$ using $r-1$ colors each such that $|c(V(T_1+T_2))|=2(r-1)$. The resulting coloring is  a conditional $(2(r-1),r)$-coloring of $T_1+T_2$, as (C1) is satisfied because $c(V(T_1)) \cap c(V(T_2)) = \emptyset $, and (C2) is satisfied because $c(V(T_1)) \cap \  c(V(T_2)) = \emptyset$ and for all $w \in V(T_1+T_2)$, $|c(N_{T_1+T_2}(w))| \geq (r-1)+1 \geq$  min  $\{d(w),r \}$. Hence the result. 
\end{proof}
\newtheorem{thms4}[thms1]{Theorem}  
\begin{thms4}
Given any two graphs $G_1$ and $G_2$, let $r_1$ and $r_2$ be such that $r_1 \geq \delta(G_1)$ and $r_2 \geq \delta(G_2)$.  Then  $\chi_r(G_1 \ \Box \ G_2) \leq \chi_{r_1}(G_1). \chi_{r_2}(G_2)$  where $r \leq \delta(G_1)+ \delta(G_2)$.
\end{thms4}
\begin{proof}
Let $\chi_{r_1}(G_1)= g_1$ and $\chi_{r_2}(G_2)= g_2$. Let $c_{G_1}$ (resp. $c_{G_2}$) be a proper $(g_1,r_1)$- (resp. $(g_2,r_2)$-) coloring of $G_1$ (resp. $G_2$). Then let $c_{G_1 \Box G_2}$ be a coloring of $G_1 \Box G_2$ wherein we assign to any vertex $(u_1,u_2) \in V(G_1 \ \Box \ G_2)$ the color denoted by the ordered pair $(c_{g_1}(u_1),c_{g_2}(u_2))$. This coloring uses $g_1.g_2$ colors and it defines a proper coloring of $G_1 \ \Box \ G_2$. Therefore $c_{G_1 \Box G_2}$ satisfies (C1).  
Let $(u_1,u_2) \in V(G_1 \ \Box \ G_2)$ such that $u_1 \in V(G_1)$ and $u_2 \in V(G_2)$. Since $c_{G_1}$ and $c_{G_2}$ satisfy (C2), by the definition of $G_1 \Box G_2$, a vertex $(u_1,u_2)$ has at least $\min \{r_1, \delta(G_1)\}=\delta(G_1)$ distinctly colored neighbors of the form $(u',u_2)$ because $|c(N_{G_1}(u_1))| \geq  \delta(G_1)$ and at least $\min \{r_2, \delta(G_2)\}=\delta(G_2)$ distinctly colored neighbors of the form $(u_1,u'')$  because  $|c(N_{G_2}(u_2))| \geq  \delta(G_2)$. Therefore $|c(N_{G_1 \ \Box \ G_2}((u_1,u_2))| \geq \delta(G_1)+ \delta(G_2)\geq r$. Hence $c_{G_1 \Box G_2}$ satisfies (C2) and the result follows.   
\end{proof}
\newtheorem{thms2}[thms1]{Theorem}  
\begin{thms2}
Let $G(V_1,V_2,E)$ be a bipartite graph, $S_1=\bigcap_{u \in V_1}N_G(u)$, $S_2=\bigcap_{v \in V_2}N_G(v)$ and w.l.o.g. let $|S_1| \le |S_2|$. Then  $\chi_r(G) = 2r$ where $r \leq |S_1|$ . 
\end{thms2} 
\begin{proof}
In any proper coloring of $G$, from the given conditions $|c(V_1)| \geq r$ and $|c(V_2)| \geq r$ as $G$ is bipartite. Since $r \leq |S_1|$ and $c(S_1) \cap c(S_2) = \emptyset $ we have $\chi_r(G) \geq 2r$. But there exists a proper $2r$-coloring of $G$ such that $|c(S_1)|= |c(S_2)|=r$ because every bipartite graph is bicolorable and $r \geq 2$. This coloring also satisfies (C2) as $S_1 \subseteq V_2$ and $S_2 \subseteq V_1$. Thus  $\chi_r(G) \leq 2r$. Hence $\chi_r(G) = 2r$.
\end{proof}
\newtheorem{thms5}[thms1]{Theorem}  
\begin{thms5}
Let $L(T)$ be the line graph of complete $k$-ary tree $T$ with height $h \geq 2$. Then \[\chi_r(L(T)) = \left\{ 
\begin{array}{l l}
k+1, & \quad \mbox{if $r \leq k$. \textsl{}}\\  
2k+1,   & \quad \mbox {if $ r = \Delta$. \textsl{}}\\ \end{array} \right. \] 
\end{thms5}
\begin{proof}
Let $V(L(T))= \left \{v_1,v_2,\dotsc,v_{e(h)}\right \}$, where $e(h)=\frac{k^{h+1}-1}{k-1}-1$. In $T$  we assume that the root is at level $0$ and for each $l$ ($1 \leq l \leq h$), $v_{e(l-1)+1}$ to $v_{e(l)}$ represent the edges between levels $l-1$ and $l$, numbered from `left' to `right'. It can be  seen that $\Delta(L(T))=2k$ and $\omega(L(T))=k+1$. In the ordering $v_{e(h)}, \dotsc, v_1$ of the vertices of $L(T)$, for each $i$ ($1 \leq i \leq e(h)$), $v_i$ is a simplicial vertex in the subgraph induced by $\{v_i,\dotsc,v_1\}$. Hence the ordering is a p.e.o. and $L(T)$ is chordal. As every chordal graph is perfect,  $\chi(L(T))= \omega(L(T))= k+1$. Since every vertex of $L(T)$ is in a $K_{k+1}$, we also have $\chi_r(L(T))=k+1$ if $r \leq k$. Thus $\chi_r(L(T)) = k+1$, if $r \le k$.  From \cite{Lai06} we know $\chi_r(G) \geq \min \{r, \Delta \}+1$. Taking $G=L(T)$ we have $\chi_r(L(T)) \geq \min \{r, \Delta \}+1=2k+1$ if $r=\Delta$. Similar to the greedy (vertex) coloring, color the vertices in the order $v_1, \dotsc, v_{e(h)}$ by assigning to each vertex the first available color not already used for any of the lower indexed vertices within distance two. In the assumed order, each vertex has at most $\Delta$ lower indexed vertices within distance two; therefore $\chi_\Delta(L(T)) \leq \Delta +1=2k+1$.  Hence  $\chi_r(L(T)) = 2k+1$ if $r = \Delta$.  
\end{proof} 
\section{Conditional colorability of some parameterized \\ graphs}
For this section we need the following definitions. The \textit{middle graph } $M(G)$ of $G$  is the graph whose vertex set corresponds to $V(G) \cup E(G)$; in $M(G)$ two vertices are adjacent iff 
$(i)$ they are adjacent edges of $G$ or 
$(ii)$ one is a vertex and the other is an edge incident with it \cite{Mich83}. The \textit{windmill graph} $Wd(k,n)$ consists of $n$ copies of $K_k$ and identifying one vertex from each $K_k$ as the common center vertex. In particular $Wd(3,n)$ is called the \textit{Friendship graph} $F_n$ \cite{Galin09}. A \textit{complete $k$-partite graph}  $K(n_1,\ldots,n_k)$ has vertex set $V=V_1 \cup \ldots \cup V_k$ where $V_1,\ldots,V_k$ are mutually disjoint with $|V_i|=n_i$; each vertex $v \in V_i$ is connected to all vertices of $V \setminus V_i, \; i=1,\ldots,k$. The \textit{$n$-gear $G_n$} consists of a cycle $C_{2n}$ on $2n$ vertices where every other vertex on the cycle is adjacent to a $(2n+1)^{\rm th}$ center vertex labeled $v_0$. The  vertices in the $C_{2n}$ are labeled sequentially $v_1, \ldots,v_{2n}$ such that for  $1 \leq i \leq 2n-1$, $v_i$ is adjacent to $v_{i+1}$, $v_1$ is adjacent to $v_{2n}$,  and every vertex in $V_{oddi}= \{v_i |\; \text{i is odd and}\; 1 \le i \le 2n-1  \}$ is adjacent to $v_0$. \medskip \\
We begin with two lemmas followed by our propositions.
\newtheorem{lem1}{Lemma}[section]
\begin{lem1}
For a non-negative integer $r \leq \Delta$ let \textit{Vset-$d2r$} in a graph $G$ be a set $S_{d2r} \subseteq V(G)$ with the following two properties:
\begin{quote}
$(i)$ For all $u \in S_{d2r}$,  $d(u) \leq r$.  \\
$(ii)$ For all $u_1,u_2 \in S_{d2r}$ either $(u_1,u_2) \in E(G)$ or there exists a $u_3 \in S_{d2r}$ such that $u_1,u_2 \in N(u_3)$ or both.
\end{quote}
Then $\chi_r(G) \geq |S_{d2r}|$.
\end{lem1}
\begin{proof}
Assume that $\chi_r(G) < |S_{d2r}|$. Then there exist at least two vertices $u_1,u_2 \in S_{d2r}$ such that $c(u_1)=c(u_2)$. By the definition of Vset-$d2r$    $(ii)$ holds; if $(u_1,u_2) \in E(G)$, (C1) is violated; if $u_3 \in S_{d2r}$ such that $u_1,u_2 \in N(u_3)$, $|c(N(u_3))|< min\{r,d(u_3)\}=d(u_3)$ and hence (C2) is violated at $u_3$. Therefore  $\chi_r(G) \geq |S_{d2r}|$.
\end{proof}
\newtheorem{lem2}[lem1]{Lemma}
\begin{lem2}
Given a graph $G$, let $c \colon V(G) \to \{1,\ldots,k \}$ be a coloring such that for a given $r$, $c$ satisfies (C2). Let the condition (C3) be
\begin{quote}
(C3) For each edge $uv$ in $G$ there exists a vertex $w$ such that $d(w)$ $\leq r$ and $u,v \in N_{G}(w)$.
\end{quote}
If $G$ satisfies (C3) also then $c$ satisfies (C1) and hence $c$ defines a conditional $(k,r)$-coloring of $G$.
\end{lem2}
\begin{proof}
The proof by contradiction is straightforward.
\end{proof}
\newtheorem{thm1}[lem1]{Proposition} 
\begin{thm1} 
For integers $k \geq 3, n \geq 2$ let $Wd(k,n)$ be a windmill graph. Then 
\[\chi_r(Wd(k,n)) = \left\{ 
\begin{array}{l l}
k, & \quad \mbox{if $2 \leq r \leq k-1$. \textsl{}}\\  
min \{r,\Delta\}+1, & \quad \mbox{if $ r \geq k$. \textsl{}} \\ \end{array} \right. \]
\end{thm1}
\begin{proof}
Every vertex $v$ of $Wd(k,n)$ is contained in a $K_k$; it can be seen that $|c(N(v))| \geq k-1$ in any proper coloring $c$ of $Wd(k,n)$. Therefore if $2 \leq r \leq k-1 $,  conditional $(\chi(Wd(k,n)),r)$-coloring of $Wd(k,n)$ exists and we know that $\chi(Wd(k,n))=k$. 
From \cite{Lai06} we have $\chi_r(G) \geq min \{r, \Delta \}+1$. Taking $G=Wd(k,n)$ we get $\chi_r(Wd(k,n))$  $\geq min \{r, \Delta \}+1$. In $Wd(k,n)$ only the center vertex has a degree $n(k-1)> k$. For a $k'$ if $k' > k$, then every $k$-colorable graph is also $k'$-colorable. Hence if $ r \geq k$, then a proper $(min \{r,\Delta\}+1)$-coloring of $Wd(k,n)$ exists, which is also a conditional $(min \{r,\Delta\}+1,r)$-coloring. Therefore $\chi_r(Wd(k,n)) \leq min \{r, \Delta \}+1$. Hence $\chi_r(Wd(k,n)) = min \{r, \Delta \}+1$  if $r \geq k$.    
\end{proof}
\newtheorem{thm}[lem1]{Proposition} 
\begin{thm}
Let $L(Wd(k,n))$ be the line graph of windmill graph $Wd(k,n)$. Then 
\begin{equation*}
\chi_\Delta(L(Wd(k,n)))= n(k-1)+\binom{k-1}{2}= z \;(say).
\end{equation*} 
\end{thm}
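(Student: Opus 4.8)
The plan is to exploit the special feature of the case $r=\Delta$. Since $d(v)\le\Delta$ for every vertex $v$, condition (C2) with $r=\Delta$ reads $|c(N(v))|\ge d(v)$, i.e.\ every open neighbourhood must be rainbow-coloured. Together with (C1) this means that a conditional $(k,\Delta)$-coloring of $H:=L(Wd(k,n))$ is exactly an assignment of colours in which any two vertices of $H$ that are either adjacent or share a common neighbour receive distinct colours. Translating back to $Wd(k,n)$, I would read the vertices of $H$ as the edges of $Wd(k,n)$ and call two such edges \emph{in conflict} when, in $H$, they are adjacent or lie at distance two. So the task reduces to counting the colours needed to colour the edges of $Wd(k,n)$ so that no two conflicting edges agree.

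First I would sort the edges of $Wd(k,n)$ into the \emph{center edges} $C$ incident with $v_0$ (there are $n(k-1)$ of them) and, for each copy $i$, the \emph{petal edges} $P_i$ joining two non-center vertices of that copy (there are $\binom{k-1}{2}$ of them). The key structural facts I would establish are: (a) any two center edges conflict, since they share $v_0$; (b) every center edge conflicts with every petal edge, even one in another copy, because a petal edge $(q,s)$ and a center edge $(v_0,p)$ are joined in $H$ through the center edge $(v_0,q)$; (c) within a single copy all edges pairwise conflict; and, crucially, (d) two petal edges lying in \emph{different} copies do \emph{not} conflict, since the shortest $H$-path between them, e.g.\ $(p,q)$--$(v_0,p)$--$(v_0,s)$--$(s,t)$, has length three. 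Thus the conflict graph is the join of the clique $C$ with the disjoint union $P_1\cup\cdots\cup P_n$ of $n$ cliques.

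For the lower bound I would invoke the Vset-$d2r$ lemma with $S=C\cup P_1$, so $|S|=n(k-1)+\binom{k-1}{2}=z$. Property $(i)$ is automatic as $r=\Delta$, and property $(ii)$ follows from (a)--(c): adjacent pairs are handled directly, a center/petal pair $\{(v_0,p),(q,s)\}$ uses the center edge $(v_0,q)\in S$ as the common neighbour, and a pair of disjoint petal edges $\{(p,q),(s,t)\}$ of copy $1$ uses the petal edge $(p,s)\in P_1\subseteq S$. Hence $\chi_\Delta(H)\ge z$.

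For the upper bound I would give $z$ colours: paint the $n(k-1)$ center edges with distinct colours $1,\dots,n(k-1)$, and in \emph{every} copy paint its $\binom{k-1}{2}$ petal edges with the same fresh palette $n(k-1)+1,\dots,z$. By (d), reusing the petal palette across copies creates no conflict, while (a)--(c) guarantee that the remaining conflicting pairs are coloured apart; equivalently one checks (C2) directly and appeals to the (C3) lemma to obtain (C1). This yields a conditional $(z,\Delta)$-coloring, so $\chi_\Delta(H)\le z$ and equality follows. The main obstacle is pinning down fact (d) against fact (b): one must see that a center edge conflicts with petal edges of all copies (forcing the $n(k-1)$ term) whereas petal edges of distinct copies may safely share colours (so only one $\binom{k-1}{2}$ block is needed), which is exactly what separates the answer $z$ from the total edge count.
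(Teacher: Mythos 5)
Your proposal is correct and follows essentially the same route as the paper: the lower bound uses the same Vset-$d2r$ set (all center edges together with the petal edges of a single copy) via Lemma 3.1, and the upper bound uses the same coloring (distinct colors on the $n(k-1)$ center edges, one palette of $\binom{k-1}{2}$ colors reused across all copies' petal edges). The only difference is presentational: you justify the coloring through the distance-two (conflict-graph) characterization of conditional $(k,\Delta)$-colorings and prove explicitly the structural facts (a)--(d) that the paper dispatches with ``it can be seen,'' which makes the verification cleaner but does not change the argument.
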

\begin{proof}
It follows that $|V(L(Wd(k,n)))|=n\binom{k}{2} = inx(k,n)$ (say). Let $V(L(Wd(k,n)))= \{v_1,\ldots,v_{inx(k,n)}\}$. We assume that in $Wd(k,n)$, for all $1 \leq i \leq n,   
 v_{(i-1)(k-1)+1}$\  to\  $v_{i(k-1)}$ and $v_{n(k-1)+inx(k-1,i-1)+1}$\  to\  $v_{n(k-1)+inx(k-1,i)}$ represent respectively the edges of $i^{th}$ copy of $K_k$ incident with and not incident with the center vertex. It can be seen that $L(Wd(k,n))$ has a clique $\{ v_1,\ldots,v_{n(k-1)} \}$ and a Vset-$d2r$ $S_{d2r}=\{v_1,\ldots,v_z\}$. By lemma 3.1, $\chi_\Delta(L(Wd(k,n)))$ $\geq |S_{d2r}|=z$. We now define the coloring assignment $c \colon V(L(Wd(k,n))) \to \{1,\ldots,z \}$ as follows: 
\begin{equation*}
c(v_i) =
\begin{cases}
i, & \text{if $1 \leq i \leq z.$}\\
i \mod {\binom{k-1}{2}}+n(k-1)+1, & \text{otherwise.}
\end{cases}
\end{equation*}
In $c$ the \textit{if}-case uses $z$ and the \textit{otherwise}-case doesn't use any extra color. For all $1 \leq i \leq n$ \\ $|c(\{v_{(i-1)(k-1)+1},\ldots,v_{i(k-1)}\})$ $\cup$  $c(\{v_{n(k-1)+ inx(k-1,i-1)+1},\ldots,v_{n(k-1)+ inx(k-1,i)}\})|$ \\ = $|\{v_{(i-1)(k-1)+1},\ldots,v_{i(k-1)}\} \cup \{v_{n(k-1)+inx(k-1,i-1)+1},\ldots,v_{n(k-1)+inx(k-1,i)}\}|$ and \\ $|c\{v_1,\ldots,v_{n(k-1)}\}|=n(k-1)$; hence (C2) is satisfied at all the vertices. The graph $G= L(Wd(k,n))$ can be seen to satisfy (C3) in lemma 3.2. By lemma 3.2, $\chi_\Delta (L(Wd(k,n))) \leq z$; hence $\chi_\Delta (L(Wd(k,n))) = z$.
\end{proof}
\newtheorem{thm2}[lem1]{Proposition} 
\begin{thm2}
Let $L(F_n)$ be the line graph of $F_n$. Then
\begin{equation*}
\chi_r(L(F_n)) =
\begin{cases}
2n, & \text{if $r < \Delta$.}\\ 
2n+1 , & \text{if $r = \Delta$.}
\end{cases}
\end{equation*}
\end{thm2}
\begin{proof}
We have the following two cases: \\
\textbf{Case 1:} $r < \Delta:$ Let $r'=\Delta-1$. Since $|V(L(F_n))|=3n$, let $V(L(F_n))= \{v_1,\ldots,v_{3n}\}$. We assume that for all $1 \leq i \leq n, v_{2i-1}$ and $v_{2i}$ represent the edges of $i^{\rm th}$ copy of $K_3$ incident with the center vertex and $v_{2n+i}$ represents the edge of $i^{\rm th}$ copy of $K_3$ not incident with the center vertex of $F_n$. As $\{v_1,\ldots,v_{2n} \}$ is the maximum size clique of $L(F_n)$, $\omega(L(F_n))=2n$. Since $\chi_r(G) \geq \omega(G)$, taking $r=r'$ and $G=L(F_n)$ we have $\chi_{r'}(L(F_n))\geq 2n$. We now define the coloring assignment $c \colon V(L(F_n)) \to \{1,\ldots,2n\}$ as follows: 
\begin{equation*}
c(v_i) =
\begin{cases}
i, & \text{if $1 \leq i \leq 2n$.}\\
2n, & \text{if $2n+1 \leq i \leq 3n-1$.}\\
1, & \text{if $i = 3n$.}
\end{cases}
\end{equation*}
Note that in $c$ the first case uses $2n$ colors and the remaining cases use no new colors. It can be verified that $c$ defines a conditional $(2n,r')$-coloring of $L(F_n)$. Thus $\chi_{r'} (L(F_n)) \leq 2n$; hence $\chi_{r'} (L(F_n))= 2n$. From \cite{Lai06} it follows that $\omega(G) \leq \chi_{r_1}(G) \leq \chi_{r_2}(G)$ if $r_1 \leq r_2$. Taking $G=L(F_n),r_1=r$ and $r_2=r'$ it follows that $\chi_r(L(F_n))= 2n$. \\
\textbf{Case 2:} $r = \Delta:$ Since $F_n=Wd(3,n)$, by theorem 1 we have $\chi_\Delta(L(Wd(3,n)))=\chi_\Delta(L(F_n))=2n+1$.
\end{proof}
\newtheorem{thm3}[lem1]{Proposition} 
\begin{thm3}
Let $M(K(n_1,\ldots,n_k))$ be the middle graph of $K(n_1,\ldots,n_k)$. Then 
\begin{equation*}
\chi_\Delta(M(K(n_1,\ldots,n_k)))= k+l.
\end{equation*} 
where $n=\sum_{i=1}^k n_i$ and $l=1/2 \sum_{i=1}^k n_i(n-n_i)$.
\end{thm3}
\begin{proof}
We know that $K(n_1,\ldots,n_k)$ has $l$ edges, $|V(M(K(n_1,\ldots,n_k)))|=l+n$. Let $V(M(K(n_1,\ldots,n_k)))=\{v_1,\ldots,v_{l+n}\}$ and $n_0=0$. We assume that $v_1$ to $v_l$ represent the edges and for all $1 \leq i \leq k, v_{l+1+\sum_{j=0}^{i-1} n_j}$ to $v_{l+\sum_{j=0}^i n_j}$ represent the $i^{\rm th}$ partition vertices of $K(n_1,\ldots,n_k)$. Let $r=\Delta,V_e=\{v_1,\ldots,v_l\}$ and $V_v=\{v_{l+1},\ldots,v_{l+n}\}$. It can be easily seen that $M(K(n_1,\ldots,n_k))$ has a Vset-$d2r$ given by \\ $S_{d2r}=V_e \; \cup \; \{v_{l+1},v_{l+n_1+1},v_{l+(n_1+n_2)+1},v_{l+(n_1+n_2+n_3)+1},\ldots,v_{l+(n_1+\ldots+n_{k-1})+1}\}$. \\ Thus by lemma 3.1, $\chi_r(M(K(n_1,\ldots,n_k))) \geq |S_{d2r}|= k+l$. We now define the coloring assignment $c \colon V(M(K(n_1,\ldots,n_k)))$   $\to \{1,\ldots,k+l\}$ as follows: 
\begin{equation*}
c(v_i) =
\begin{cases}
i, & \text{if $1 \leq i \leq l.$}\\ 
l+p, & \text{otherwise, where $p$ is such that $1+\sum_{j=0}^{p-1} n_j \leq i-l \leq \sum_{j=0}^p n_j$.}
\end{cases}
\end{equation*}
In $c$ the first case uses $l$ colors and the remaining case uses $k$ new colors. We have $|c(V_e)|=|V_e|$ and $c(V_e) \cap c(V_v)= \emptyset$; for any two vertices $v_i,v_{i'} \in V_v$ if there exists no $p$ such that $l+1+\sum_{j=0}^{p-1} n_j \leq i,i' \leq l+\sum_{j=0}^i n_j$ then $c(v_i) \neq c(v_{i'})$; hence (C2) is satisfied at all the vertices. Taking $G=M(K(n_1,\ldots,n_k))$ it can be seen that (C3) of lemma 3.2 is satisfied. By lemma 3.2, $\chi_\Delta (M(K(n_1,\ldots,n_k))) \leq k+l$. Hence $\chi_\Delta (M(K(n_1,\ldots,n_k))) = k+l$. 
\end{proof}
\newtheorem{thm4}[lem1]{Proposition} 
\begin{thm4}
For $n \geq 4$, let $M(C_n)$ be the middle graph of $C_n$. Then \[\chi_r(M(C_n)) = \left\{  
\begin{array}{l l}
3, & \quad \mbox{if $r = 2$. \textsl{}}\\  
4, & \quad \mbox {if $ r =3$. \textsl{}}\\ \end{array} \right. \] 
\end{thm4}
\begin{proof}
Let $V(M(C_n))= \{v_1,\ldots,v_{2n}\}$. We assume that $v_1$ to $v_n$ and  $v_{n+1}$ to $v_{2n}$ represent the vertices and edges of $C_n$ respectively where for all $i$ ($1 \leq i \leq n$) $v_{n+i}$ is incident with both $v_{i}$ and $v_{i \; mod  \; n + 1}$. We have the following cases:  \\
\textbf{Case 1:} $r=2:$ Since $r < \Delta,\; \chi_r(M(C_n)) \geq 3$. We define the coloring assignment $c \colon V(M(C_n)) \to \{1,2,3 \}$ thus:\\  
For even $n$ 
\begin{equation*}
c(v_i) =
\begin{cases}
1, & \text{if $1 \leq i \leq n.$}\\
2, & \text{if $n+1 \leq i \leq 2n$ and $i:$ odd.}\\
3, & \text{otherwise.}
\end{cases}
\end{equation*}
For odd $n$  
\begin{equation*}
c(v_i) =
\begin{cases}
1, & \text{if $i=1$ or both $n+1 \leq i \leq 2n$ and $i:$ odd.}\\
2, & \text{if $i=2n$ or $2 \leq i \leq n-1$.}\\
3, & \text{otherwise.}
\end{cases}
\end{equation*}
In both the cases it can be verified that $c$ defines a conditional $(3,r)$-coloring of $M(C_n)$. Thus $\chi_r (M(C_n)) \leq 3$; hence  $\chi_r (M(C_n))=3$. \\
\textbf{Case 2:} $r=3:$ Since $r < \Delta, \; \chi_r(M(C_n)) \geq 4$. We define the coloring assignment $c \colon V(M(C_n)) \to \{1,2,3,4 \}$ as follows:
\begin{equation*}
c(v_i) =
\begin{cases}
1, & \text{if $n+1 \leq i \leq 2n$ and $(i-n):$ even.}\\
2, & \text{if $1 \leq i \leq n$ and $i:$ odd.}\\
3, & \text{if $i=n+1$ or both $4 \leq i \leq n$ and $i:$ even.}\\ 
4, & \text{otherwise.}
\end{cases}
\end{equation*} 
It can be verified that $c$ defines a conditional $(4,r)$-coloring of $M(C_n)$. Thus $\chi_r (M(C_n)) \leq 4$; hence  $\chi_r (M(C_n))=4$. 
\end{proof}
\newtheorem{thm5}[lem1]{Proposition} 
\begin{thm5}
Let $M(F_n)$ be the middle graph of $F_n$. Then \[\chi_r(M(F_n)) = \left\{  
\begin{array}{l l}
2n+1, & \quad \mbox{if $r \leq 2n$. \textsl{}}\\  
2n+2, & \quad \mbox {if $ r =2n+1$. \textsl{}}\\ 
2n+4, & \quad \mbox {if $ r =\Delta$. \textsl{}}\\ \end{array} \right. \]
\end{thm5}
\begin{proof}
From the definition we have $|V(M(F_n))|=5n+1$. Let $V(M(F_n))= \{v_1,\ldots,v_{5n+1}\}$. We assume that for all $i$ ($1 \leq i \leq n) \; v_{2i-1}$ and $v_{2i}$ represent the edges of $i^{\rm th}$ copy of $K_3$ incident with the center vertex, $v_{2(n+i)}$ and $v_{2(n+i)+1}$ represent the vertices of $i^{\rm th}$ copy of $K_3$ excluding the center vertex,$v_{4n+i+1}$ represents the edge of $i^{\rm th}$ copy of $K_3$ not incident with the center vertex and $v_{2n+1}$ represents the center vertex of $F_n$. It is clear that $\Delta(M(F_n))=2n+2$. We have the following cases\ : \\
\textbf{Case 1:} $r \leq 2n:$ Let $r'=2n$. Since $S=\{v_1,\ldots,v_{2n+1}\}$ is the maximum size clique of $M(F_n), \omega(M(F_n))=2n+1$. We know that $\chi_{r}(G) \geq \omega(G)$, taking $G=M(F_n)$ and $r=r'$, we get $\chi_{r'}(M(F_n))\geq 2n+1$. Now we define the coloring assignment $c \colon V(M(F_n)) \to \{1,\ldots,2n+1 \}$ as follows: 
\begin{equation*}
c(v_i) =
\begin{cases}
i, & \text{if $1 \leq i \leq 2n+1$.}\\
3, & \text{if $i=2n+2$.}\\
4, & \text{if $i=2n+3$.}\\ 
1, & \text{if $2n+4 \leq i \leq 4n+1$ and $(i-2n):$ even.}\\
2, & \text{if $2n+4 \leq i \leq 4n+1$ and $(i-2n):$ odd.}\\
2n+1, & \text{if $4n+2 \leq i \leq 5n+1$.}
\end{cases}
\end{equation*}
It is clear that the total number of colors used in $c$ is $2n+1$. Since $S$ is a clique, $|c(S)|=|S|=2n+1$ and $r'<2n+1$, for all $v \in S$ (C2) is satisfied at $v$. For all $i$  ($1 \leq i \leq n)$  $|c(\{v_{2i-1},v_{2i},v_{2(n+i)},v_{2(n+i)+1},v_{4n+i+1}\})|=|\{v_{2i-1},v_{2i},v_{2(n+i)},v_{2(n+i)+1},v_{4n+i+1}\}|$; therefore the remaining vertices also satisfy (C2). If we take $G$ to be $M(F_n)$ it follows (C3) of lemma 3.2 is satisfied. By lemma 3.2, $\chi_{r'} (M(F_n)) \leq 2n+1$; hence  $\chi_{r'} (M(F_n))= 2n+1$. From [3] we infer $\omega(G) \leq \chi_{r_1}(G) \leq \chi_{r_2}(G)$ if $r_1 \leq r_2$. Taking $G=M(F_n),r_1=r$ and $r_2=r'$, it follows that $\chi_r(M(F_n))= 2n+1$.\\ 
\textbf{Case 2:} $r=2n+1:$ Since $r < \Delta, \; \chi_r(M(F_n)) \geq r+1$. Now we define the coloring assignment $c' \colon V(M(F_n)) \to \{1,\ldots,2n+2 \}$ as follows: 
\begin{equation*}
c'(v_i) =
\begin{cases}
c(v_i)\, \; \text{as defined in case 1}, & \text{if $1 \leq i \leq 4n+1$}\; \text{and}\\
2n+2, & \text{otherwise.}
\end{cases}
\end{equation*}
In $c'$ the \textit{if}-case uses $2n+1$ and the \textit{otherwise}-case uses one new color. Since $|c'(S)|=|S|$ and $c'(S) \cap c'(\{v_{4n+2},\ldots,v_{5n+1}\})= \emptyset$, for all $v \in S$ (C2) is satisfied at $v$. By extending the argument similar to case 1, we can conclude that $c'$ defines a conditional $(2n+2,r)$-coloring of $M(F_n)$. Thus $\chi_{r} (M(F_n)) \leq 2n+2$; hence  $\chi_{r} (M(F_n))= 2n+2$. \\
\textbf{Case 3:} $r=\Delta:\,M(F_n)$ has a Vset-$d2r$ $S_{d2r}=\{v_1,\ldots,v_{2n+3},v_{4n+2}\}$; by lemma 3.1, $\chi_{r} (M(F_n)) \geq |S_{d2r}|= 2n+4$. We now define the coloring assignment $c \colon V(M(F_n)) \to \{1,\ldots,2n+4 \}$ as follows: 
\begin{equation*}
c(v_i) =
\begin{cases}
i, & \text{if $1 \leq i \leq 2n+3$.}\\
2n+3, & \text{if $2n+4 \leq i \leq 4n+1$ and $(i-2n):$ even.}\\
2n+4, & \text{if $i=4n+2$ or both $2n+4 \leq i \leq 4n+1$ and $(i-2n):$ odd.}\\
2n+2, & \text{if $4n+3 \leq i \leq 5n+1$.}
\end{cases}
\end{equation*}  
It is clear that in $c$ the total number of colors used is $2n+4$. For all $i$ ($1 \leq i \leq n)$ we have $|c(S) \cup c(\{v_{2(n+i)},v_{2(n+i)+1},v_{4n+i+1}\})|=|S \cup \{v_{2(n+i)},v_{2(n+i)+1},v_{4n+i+1}\}|$; therefore all the vertices satisfy (C2). With $G=M(F(n))$ we see that (C3) of lemma 3.2 is satisfied. Thus by lemma 3.2,  $\chi_r(M(F_n)) \leq 2n+4$. Hence  $\chi_r (M(F_n))= 2n+4$. 
\end{proof}
\newtheorem{thm6}[lem1]{Proposition}
\begin{thm6}
Let $M(K(n_1,n_2))$ be the middle graph of $K(n_1,n_2)$ and w.l.o.g. assume $n_1 \leq n_2$. Then 
\begin{equation*}
\chi_r (M(K(n_1,n_2))) = 
\begin{cases}
n_2+1, & \text{if $\,r \leq n_2$.}\\
n_2+2, & \text{if $\,r = n_2+1$.} 
\end{cases}
\end{equation*}
\end{thm6}
\begin{proof}
We have $|V(M(K(n_1,n_2)))|=n_1n_2+n$ where $n=n_1+n_2$. Let the vertex set $V(M(K(n_1,n_2)))$ be $\{v_1,\ldots,v_{n+n_1n_2}\}$. We assume that $v_1$ to $v_{n_1}$ represent the vertices of the first partition, $v_{n_1+1}$ to $v_n$ represent the vertices of the second partition and $v_{n+1}$ to $v_{n+n_1n_2}$ represent the edges of $K(n_1,n_2)$. We also assume that for all $i$ ($1 \leq i \leq n_1) \; v_{n+(i-1)n_2+1}$ to $v_{n+in_2}$ represent the edges incident at $v_i$ and for all $j$ ($1 \leq j \leq n_2) \; v_{n+(i-1)n_2+j}$ is incident with $v_i$ and $v_{n_1+j}$. It is clear that $\chi_\Delta (M(K(n_1,n_2))) =n$. We have the following cases: \\ 
\textbf{Case 1:} $r \leq n_2:$ Let $r'=n_2$. Since $\{v_1,v_{n+1},\ldots,v_{n+n_2}\}$ is the maximum size clique of $M(K(n_1,n_2))$, $\omega(M(K(n_1,n_2)))=n_2+1$. We know, $\chi_{r}(G) \geq \omega(G)$; taking $G=M(K(n_1,n_2))$ and $r=r'$, we get $\chi_{r'}(M(K(n_1,n_2)))\geq n_2+1$. We define the coloring assignment $c \colon V(M(K(n_1,n_2)))$ $\to \{1,\ldots,n_2+1 \}$ as follows: 
\begin{equation*}
c(v_i) =
\begin{cases}
n_2+1, & \text{if $1 \leq i \leq n$.}\\
1+(\lfloor(i-1-n)/n_2\rfloor+(i-n)) \mod {n_2}, & \text{otherwise.}
\end{cases}
\end{equation*}
In $c$ the \textit{if}-case uses one and the \textit{otherwise}-case uses $n_2$ new colors. For all $i$ ($1 \leq i \leq n_1) \; |c(N[v_i])|=|c(\{v_i,v_{n+(i-1)n_2+1},\ldots,v_{n+in_2}\})|=|\{v_i,v_{n+(i-1)n_2+1},\ldots,v_{n+in_2}\}|=|N[v_i]|$;  hence for all $v \in$ $V(M(K(n_1,n_2))) \setminus \{v_{n_1+1},\ldots,v_n\}$ (C2) is satisfied at $v$. For all $j$ ($n_1+1 \leq j \leq n) \; |c(N(v_j))|=|c(\{v_{j+in_2}: \forall i: 1 \leq i \leq n_1 \})|=|\{v_{j+in_2}: \forall i: 1 \leq i \leq n_1 \}|=|N(v_j)|$; hence (C2) is satisfied at all $v_j$. Now (C3) of lemma 3.2 is satisfied if we set $G=M(K(n_1,n_2))$. By lemma 3.2, $\chi_{r'} (M(K(n_1,n_2))) \leq n_2+1$; hence  $\chi_{r'} (M(K(n_1,n_2)))= n_2+1$. We know that $\omega(G) \leq \chi_{r_1}(G) \leq \chi_{r_2}(G)$ if $r_1 \leq r_2$. Taking $G=M(K(n_1,n_2)),r_1=r$ and $r_2=r'$, it follows that $\chi_r(M(K(n_1,n_2)))= n_2+1$.\\ 
\textbf{Case 2:} $r=n_2+1:$ Since $r < \Delta,\;$ $\chi_r(M(K(n_1,n_2))) \geq r+1$. We now define the coloring assignment $c' \colon V(M(K(n_1,n_2))) \to \{1,\ldots,n_2+2 \}$ as follows: 
\begin{equation*}
c'(v_i) =
\begin{cases}
n_2+2, & \text{if $1 \leq i \leq n_1$.}\\
c(v_i)\, \text{as defined in case 1}, & \text{otherwise.}
\end{cases} 
\end{equation*}
In $c'$ the \textit{if}-case uses one and the \textit{otherwise}-case uses $n_2+1$ new colors. $c'(\{v_1,\ldots,v_n\}) \cap c'(\{v_{n+1},\ldots,v_{n+n_1n_2}\})$ = $\emptyset,\,c'(\{v_1,\ldots,v_{n_1}\}) \cap c'(\{v_{n_1+1},\ldots,v_n\})=\emptyset$ and for all $i$ ($1 \leq i \leq n) \; |c'(N(v_i))|=|N(v_i)|$; hence (C2) is satisfied at all the vertices. By  setting $G=M(K(n_1,n_2))$ we reason (C3) of lemma 3.2 is satisfied. By lemma 3.2,  $\chi_{r} (M(K(n_1,n_2))) \leq n_2+2$. Hence  $\chi_r (M(K(n_1,n_2))) = n_2+2$.
\end{proof}
\newtheorem{thm9}[lem1]{Proposition}  
\begin{thm9} 
If $G_n$ is the $n$-gear then for any $n \geq 3$ \[\chi_r(G_n) = \left\{ 
\begin{array}{l l}
4, & \quad \mbox{if $r =2$. \textsl{}}\\  
\chi_2(C_{2n})+1, & \quad \mbox{if $ r =3$.\textsl{}}\\ 
min \{r,\Delta \}+1, & \quad \mbox{if $r \geq 4$. \textsl{}}\\  
\end{array} \right. \]
\end{thm9}
\begin{proof}
Let $k = \chi_r(G_n)$. From ~\cite{Lai06} we know $\chi_r(G) \geq min \{r, \Delta\}+1$. Taking $G=G_n$ we have $\chi_r(G_n) \geq min \{r, \Delta\}+1$. Let $k=\chi_r(G_n)$.\\
\textbf{Case 1:} $r=2:$ We have $k \geq 3$. We assume that $k=3$. Let $c \colon V(G_n) \to \{1,2,3 \}$ be a conditional $(3,2)$-coloring where $c(v_i)=i$ for $i=1,2,3$. Then across $v_1, v_2, v_3$ (C1) must be true and in particular (C2) must hold at $v_2$. To satisfy (C1) at $v_3, c(v_4) \neq 3$. We branch into two cases. If $c(v_4)=1$, then by (C2) at $v_4$ we must have $c(v_5) \notin \{1,3\}$. Therefore we must have $c(v_5)=2$. To satisfy (C1), $c(v_0) \notin \{1,2,3\}$. On the other hand if $c(v_4)=2$ then to satisfy (C2) at $v_4$ we must have $c(v_5) \notin \{2,3\}$. Hence $c(v_5)=1$. To satisfy (C2) at $v_3$ while preserving proper coloring, $c(v_0) \notin \{1,2,3\}$. Therefore $k \geq 4$. To show that $k=4$, it suffices to construct a conditional $(4,2)$-coloring of $G_n$. Define $c \colon V(G_n) \to \{1,2,3,4 \}$ as follows:  \\
\textbf{Case 1.1:} $n \equiv 0 \pmod{3}:$ Set $c(v_0)=4$ and
\[c(v_i) = \left\{ 
\begin{array}{l l}
1, & \quad \mbox{if \ $i \bmod 3= 1$. \textsl{}}\\  
2, & \quad \mbox{if \ $i \bmod 3= 2 $.\textsl{}}\\ 
3, & \quad \mbox{if \ $i \bmod 3= 0$. \textsl{}}\\  
\end{array} \right. \]
\textbf{Case 1.2:} $n \equiv 2 \pmod{3}:$ Modify $c$ in case (1.1) by the assignment $c(v_{2n})=2$.\\
\textbf{Case 1.3:} $n \equiv 1 \pmod{3}:$ Modify $c$ in case (1.1) by the assignments $c(v_{2n-4})=2,c(v_{2n-3})=1,c(v_{2n-2})=3,c(v_{2n-1})=2$ and $c(v_{2n})=3$.\\ 
It can be verified that $c$ is a conditional $(4,2)$-coloring of $G_n$. Thus $k \leq 4$, and hence $\chi_2(G_n)= 4$.\\
\textbf{Case 2:} $r=3:$ We have $k \geq 4$. By (C1) $c(v_i) \neq c(v_0)$ and by (C2) at $v_i$, $c(v_{i+1}) \neq c(v_0)$ for all odd $i$ in the range $1 \leq i \leq 2n-1$. Hence $c(v_i) \neq c(v_0)$ for all $i \; (i \neq 0)$.
Since $d(v_i) \leq 3$ for $1 \leq i \leq 2n$ and $r=3$, a conditional $(\chi_3(G_n),3)$-coloring of $G_n$ gives a conditional $(\chi_2(C_{2n}),2)$-coloring of $C_{2n}$. In turn, a conditional $(\chi_2(C_{2n}),2)$-coloring of $C_{2n}$ with an additional color to $v_0$ gives a conditional $(\chi_3(G_n),3)$-coloring of $G_n$.\\ 
\textbf{Case 3:} $r \geq 4 :$ Let $l= \min \{r,\Delta(G_n) \}$. We know $\chi_r(G_n) \geq l+1 $. Since $v_0$ is the only vertex with $d(v_0) \geq r$ and $\chi_r(C_{2n}) \leq l $, conditional $(l,r)$-coloring of $V(G_n) \setminus \{v_0\}$ such that $|c(V_{oddi})| = l $, with $(l+1)$th color assigned to $v_0$ results in a conditional $(l+1,r)$-coloring of $V(G_n)$. Thus $\chi_r(G_n) \leq l+1 $, and hence $\chi_r(G_n) = \min\{r, \Delta\}+1 $.
\end{proof}
\section{Unique conditional colorability}
If $\chi(G)=k$ and every $k$-coloring of $G$ induces the same partition of $V(G)$ then $G$ is called \textit{uniquely k-colorable}. In a similar way we define unique $(k,r)$-colorability of graphs.  \medskip \\
\noindent \textbf{Definition.} If $\chi_r(G)=k$ and every conditional $(k,r)$-coloring of $G$ induces the same partition of $V(G)$, then $G$ will be called \textit{uniquely} $(k,r)$-\emph{colorable}. \medskip \\
With the following propositions we explore further unique $(k,r)$-coloring. 
\newtheorem{pro1}{Proposition}[section] 
\begin{pro1}
If $G$ is uniquely $p$-colorable and $r \leq p-1$ then $\chi_r(G) = p$. 
\end{pro1}
\begin{proof}
Since $G$ is uniquely $p$-colorable, let $c \colon V(G) \to \{1,2,\ldots,p \}$ be the proper coloring of $G$ and w.l.o.g. for $1\leq i \leq p$ let the color class $C_i$ be defined as $C_i=\{v : c(v)=i \}$. For all $u \in V(G)$ if $u \in C_i$ then for all $j \in \{1,2,\ldots,p \}$ there exists a $v \in C_j \;(j \neq i)$ -- this implies that for every $u \in V(G)$, $d(u) \geq p-1 $ and $|c(N(u))|= p-1$. Note that $c$ is also a conditional $(p,r)$-coloring of $G$ because (C2) is also satisfied as for every $u \in V(G),|c(N(u))|= p-1 \geq $ \ min $\{r,d(u)\}$.   
\end{proof}
The definition of conditional $(k,r)$-coloring of $G$ and Proposition 4.1 together imply:
\newtheorem*{cor}{Corollary}  
\begin{cor}
Every uniquely $p$-colorable graph $G$ is also uniquely $(p,p-1)$-colorable.
\end{cor}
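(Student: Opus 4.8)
The plan is to recognize that this corollary is a direct consequence of Proposition 4.1 together with a simple set-inclusion observation, so the argument should be short. First I would fix $p = \chi(G)$ and take $r = p-1$. Since $G$ is uniquely $p$-colorable and $r = p-1 \le p-1$, Proposition 4.1 applies immediately and gives $\chi_{p-1}(G) = p$. This discharges the first clause in the definition of unique $(p,p-1)$-colorability, namely that the conditional chromatic number is exactly $p$, and it also guarantees that at least one conditional $(p,p-1)$-coloring exists (indeed the canonical coloring $c$ constructed in the proof of Proposition 4.1 already is one).

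Next I would compare the two families of colorings involved. By definition a conditional $(p,p-1)$-coloring of $G$ is a surjective map $c \colon V(G) \to \{1,\ldots,p\}$ satisfying (C1) and (C2), and (C1) is precisely the statement that $c$ is a proper coloring. Hence every conditional $(p,p-1)$-coloring of $G$ is in particular a proper $p$-coloring, i.e. the set of conditional $(p,p-1)$-colorings is contained in the set of proper $p$-colorings. This inclusion is the heart of the proof.

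Finally I would invoke the hypothesis of unique $p$-colorability verbatim: every proper $p$-coloring of $G$ induces the same partition of $V(G)$. Since each conditional $(p,p-1)$-coloring lies inside this larger family, each induces that same common partition. Combining $\chi_{p-1}(G) = p$ with the fact that all conditional $(p,p-1)$-colorings induce one and the same partition is exactly the definition of unique $(p,p-1)$-colorability, so the corollary follows. I do not expect a genuine obstacle here; the only point requiring care is confirming that the definition of conditional coloring builds in (C1), so that the inclusion of conditional colorings into proper colorings is legitimate and the unique $p$-colorability hypothesis transfers directly.
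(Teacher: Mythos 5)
Your proof is correct and matches the paper's intent exactly: the paper derives this corollary directly from Proposition 4.1 (giving $\chi_{p-1}(G)=p$) together with the observation that, by (C1), every conditional $(p,p-1)$-coloring is a proper $p$-coloring and hence induces the unique partition. You have simply written out the details the paper leaves implicit.
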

\newtheorem{pro2}[pro1]{Proposition}  
\begin{pro2}
For every $k \geq 3$, there exists a uniquely $(3,2)$-colorable graph $G_k$ with $k+2$ vertices.
\end{pro2}
\begin{proof}
We take $G_1$ to be $C_3$. Suppose that $k \geq 1$ and assume  that $G_k$ has been obtained. From $G_k$, we construct $G_{k+1}$ by introducing a new vertex $w$. The vertex and edge sets of $G_{k+1}$ are defined thus:
\begin{align*}
V(G_{k+1}) &= V(G_k) \cup \{w\}, \text{where}\; w \notin V(G_k). \\
E(G_{k+1}) &= E(G_k) \cup \{(u,w),(v,w)\}, \text{where} \ (u,v) \in E(G_k). 
\end{align*}
Evidently,  $|V(G_k)| = k+2$. We need to show that for any $k$, $G_{k}$ is uniquely $(3,2)$-colorable. We prove the result by induction on  $k$. Clearly $G_1$ is uniquely $(3,2)$-colorable. For some $k>1$, assume $G_k$ is uniquely $(3,2)$-colorable. Then $G_{k+1}$ is also uniquely $(3,2)$-colorable because the new vertex $w \in G_{k+1}$ is assigned a third color different from its two neighbors which are adjacent; by the inductive hypothesis the result follows.
\end{proof}
\newtheorem{pro3}[pro1]{Proposition}  
\begin{pro3}
Every path $P_n \;(n \geq 3)$ is uniquely $(3,2)$-colorable.
\end{pro3}
\begin{proof}
From \cite{Lai06} we get $\chi_2(P_n)= 3$ . Define a conditional $(3,2)$-coloring $c \colon V(P_n) \to \{1,2,3 \}$ by    
 \begin{align*}
  c^{-1}(1) = C_1  &=  \{v_i : i \bmod 3= 1 \}, \\
  c^{-1}(2) = C_2  & = \{v_i : i \bmod 3= 2 \}, \\     
  c^{-1}(3) = C_3  & = \{v_i : i \bmod 3= 0 \},
 \end{align*} 
where, $C_1$,$C_2$ and $C_3$ are the color classes. In the conditional $(3,2)$-coloring of $P_n$, for any two vertices  $v_i,v_j \; (i \ne j)$ in $V(P_n)$ if $|i-j| \bmod 3 = 0$ then $v_i$ and $v_j$ must be colored same; otherwise either (C1) will be violated at $v_{min\{i,j\}+1}$ or (C2) will be violated at $v_{max\{i,j\}-1}$. Since $c$ is the only coloring wherein for all $v_i,v_j \in V(P_n)$, $c(v_i) = c(v_j)$, if $|i-j| \bmod 3= 0$,  $P_n$ is uniquely $(3,2)$-colorable.    
\end{proof}
\newtheorem{pro4}[pro1]{Proposition}  
\begin{pro4}
If $T$ ($\neq P_n$) is a rooted tree with $n$ verices and $k = \chi_r(T)$ then $T$ is not uniquely $(k,r)$-colorable unless $k=n$. 
\end{pro4}
\begin{proof}
Let the root of $T$ be a vertex of degree $\Delta(T)$. Let $c \colon V(T) \to \{1,2,\ldots,k \}$ be a conditional $(k,r)$-coloring of $T$. We show that a new conditional $(k,r)$-coloring of $T$ can be obtained based on $c$ if $k \neq n$. We know that every tree $T$ has  at least two vertices say, $u,v$ with degree one. Let $p(v)$ denote the parent of $v$. We have the following cases:\\ 
\textbf{Case 1:} $p(u)= p(v):$ \\ 
\textbf{Case 1.1:} $c(u) = c(v):$ Since $r \geq 2$ assigning one of the colors from the set  $c(N_T(p(u))) \setminus \{c(v)\}$ to $u$ results in a new conditional $(k,r)$-coloring. \\
\textbf{Case 1.2:} $c(u) \neq c(v):$ \\
\textbf{Case 1.2.1:} There exists a vertex $w \in V(T)$ such that $c(w) = c(u)$ or $c(w)=c(v):$ Interchanging the colors of $u$ and $v$ results in a different induced partition of $V(T)$.\\
\textbf{Case 1.2.2:} There doesn't exist a vertex $w \in V(T)$ such that $c(w) = c(u)$ or $c(w)=c(v):$\\
\textbf{Case 1.2.2.1:} $r \geq \Delta :$ If $n \neq \Delta+1$ then there exists a vertex $w' \in N_T(p(u))$ such that $d(w') \geq 2$; then interchanging the colors of $u$ and $w'$ results in a different induced partition of $V(T)$ because the subtree rooted at $w'$ doesn't contain any vertex colored $c(u)$. If $n = \Delta+1$ then $k=n$.\\
\textbf{Case 1.2.2.2:} $r < \Delta :$ There must exist at least two vertices $w_1, w_2 \in N_T(p(u))$ such that $c(w_1) = c(w_2)$. Interchanging the colors of $u$ and $w_1$ results in a different induced partition of $V(T)$ because the subtree rooted at $w_1$ doesn't contain any vertex colored $c(u)$. \\
\textbf{Case 2:} $p(u)\neq  p(v):$  \\
\textbf{Case 2.1:} $d(p(u)) < $ min $\{r,\Delta(T)\}$ or $d(p(v)) < $ min $\{r,\Delta(T)\} :$ Assigning to $u$ any color in the set $c(V(T)) \setminus c(N_T[p(u)])$ if $d(p(u)) < $ min $\{r,\Delta(T)\}$ or to $v$ any color in the set $c(V(T)) \setminus c(N_T[p(v)])$ if $d(p(v)) < $ min $\{r,\Delta(T)\}$ gives a new conditional $(k,r)$-coloring. \\
\textbf{Case 2.2:}  $d(p(u)) > $ min $\{r,\Delta(T)\}$ or $d(p(v)) > $ min $\{r,\Delta(T)\} :$ If $d(p(u)) > $ min $\{r,\Delta(T)\}$ there must exist a vertex $u' \in N_T(p(u))$ such that $c(u) = c(u')$ or two vertices $u_1,u_2 \in N_T(p(u))$ such that $c(u_1) = c(u_2) \neq c(u)$. Assigning to $u$ any of the color in the set $c(V(T)) \setminus \{c(u'),c(p(u))\}$ or  interchanging the colors of $u$ and $u_1$ and the colors $c(u)$, $c(u_1)$ in the subtree rooted at $u_1$ gives a new conditional $(k,r)$-coloring in the former and latter cases respectively. The case $d(p(v)) > $ min $\{r,\Delta(T)\}$ is similar.\\ 
\textbf{Case 2.3:}  $d(p(u)) = d(p(v)) = $ min $\{r,\Delta(T)\} :$ \\
\textbf{Case 2.3.1:} min $\{r,\Delta(T)\} > 2 :$ There exists a $w \in V(T)$ such that $c(u) \neq c(w)$ and $p(u)=  p(w)$. Making the color of $u$ as $c(w)$ and in the subtree rooted at $w$,  swapping the colors $c(u)$ and $c(w)$ gives a new conditional $(k,r)$-coloring. \\
\textbf{Case 2.3.2:} min $\{r,\Delta(T)\} = 2 :$ Since $T \neq P_n$ so $\Delta(T) > 2$ and $r=2$. There exists an ancestor of $u$ and $v$ with degree $\geq 2 $ because the root of $T$ has maximum degree. Let $w$ with $d(w) \geq 2$ be the closest ancestor of $u$ and $v$. Then there exists two children of $w$ namely $w_1$ and $w_2$ which are ancestors of $u$, $v$ respectively. If $w$ is the root of $T$ and $c(w_1) \neq c(w_2)$, interchanging the colors $c(w_1)$ and $c(w_2)$ in the subtree rooted at $w_1$ gives a new conditional $(k,r)$-coloring. If $w$ is not the root of $T$ and $c(w_1) \neq c(w_2)$, interchanging the colors $c(w_1)$ and $c(w_2)$ in the subtree rooted at $w$ gives a new conditional $(k,r)$-coloring. Otherwise (i.e., if $c(w_1) = c(w_2)$), there exists a $w_3 \in N_T(w)$ such that $c(w_3) \neq c(w_1)$ and interchanging the colors $c(w_1)$ and $c(w_2)$ in the subtree rooted at $w_1$ gives a new conditional $(k,r)$-coloring.
\end{proof}

\end{document}